\documentclass[12pt]{article}

\usepackage[margin=1in]{geometry}
\usepackage{setspace} 
\usepackage{amsmath,amsfonts,amssymb}
\usepackage[authoryear]{natbib}
\usepackage{url}
\usepackage{comment}
\usepackage{csquotes}
\usepackage{graphicx}
\usepackage{float}
\usepackage[toc,page]{appendix}

 \usepackage{amsthm}

\newtheorem{proposition}{Proposition}

\newtheorem{lemma}{Lemma}
\usepackage{enumitem}

\usepackage[colorlinks]{hyperref}
\hypersetup{colorlinks, citecolor=blue, filecolor=blue, linkcolor=blue, urlcolor=blue}

\usepackage{xcolor}
\definecolor{green}{rgb}{0.0, 0.5, 0.0}

\begin{document}

\title{Market Inefficiency in Cryptoasset Markets}
\date{} 

\author{
\begin{tabular}{cccc}
Joel Hasbrouck\thanks{email: jh4@stern.nyu.edu}
&
Julian Ma\thanks{email: julian.ma@ethereum.org}
&
Fahad Saleh\thanks{email: fahad.saleh@ufl.edu}
&
Caspar Schwarz-Schilling\thanks{email: caspar.schwarz-schilling@ethereum.org}\\
\emph{\small{NYU Stern}} & \emph{\small{Ethereum Foundation}} & \emph{\small{University of Florida}} &
\emph{\small{Ethereum Foundation}}
\end{tabular}
\vspace{0.3cm}
}

\maketitle

\begin{abstract}
\setstretch{1.2}
\noindent
We demonstrate market inefficiency in cryptoasset markets. Our approach examines investments that share a dominant risk factor but differ in their exposure to a secondary risk. We derive equilibrium restrictions that must hold regardless of how investors price either risk. Our empirical results strongly reject these necessary equilibrium restrictions. The rejection implies market inefficiency that cannot be attributed to mispriced risk, suggesting the presence of frictions that impede capital reallocation.
\end{abstract}
\vspace{1.0 cm}
\thispagestyle{empty}
~\\
\newpage
\setcounter{equation}{0}
\setcounter{footnote}{0}
\setcounter{page}{1}

\section{Introduction}

We demonstrate market inefficiency in cryptoasset markets while allowing for arbitrary risk pricing. Our approach compares investments with similar risk and derives equilibrium restrictions that must hold regardless of how investors price risk. We then test these restrictions empirically. We show that necessary equilibrium restrictions are strongly rejected, thereby demonstrating market inefficiency.

To add more detail, we examine three investments involving ether (ETH), the native asset of the Ethereum blockchain. The first investment is direct staking: investors stake ETH on the Ethereum blockchain and earn staking yields. The second investment is lending ETH through a Decentralized Lending Protocol (DLP) which pays interest to lenders. The third investment is liquid staking: investors deposit ETH with a Staking Service Provider (SSP), receive an asset backed by the deposited ETH known as staked ether (stETH), and lend this stETH on a DLP for interest. This third investment earns both the staking yield (passed through by the SSP) and the stETH lending yield. Because stETH is redeemable for ETH, it trades near parity with ETH, and thus all three investments are primarily exposed to ETH price risk. Nonetheless, liquid staking carries an additional risk: the stETH-ETH exchange rate can deviate from unity, a phenomenon known as de-pegging. This de-peg risk is small in magnitude but nonzero and we account for it within our framework by allowing that investors price de-pegging risk.

Given the similar risk profiles of the three referenced investments, we derive tight equilibrium relationships relating their yields. To add some intuition, if capital can flow freely among the three investments, then an investor lending ETH at a DLP can withdraw that capital and pursue liquid staking or direct staking. In turn, if one investment becomes more attractive, capital will flow toward it until yields adjust to restore indifference across investments. This implies that changes in one yield must be accompanied by corresponding changes in the others, giving us testable equilibrium restrictions on relative yields. All three investments share the dominant risk factor (ETH price), and while liquid staking adds de-peg risk, we allow investors to price this additional risk arbitrarily. Because the equilibrium restrictions we derive hold for any price of de-peg risk, a rejection of these restrictions cannot be explained by investor de-peg risk pricing.

Our empirical tests examine whether yields co-move as necessary for equilibrium. We find that yields do not move in a manner consistent with equilibrium. The stETH lending yield shows almost no response to changes in the ETH lending yield or the staking yield differential. Both equilibrium restrictions are strongly rejected. The failure of yields to co-move as implied by equilibrium indicates that capital is not flowing among these investments freely, implying market inefficiency. 

Our paper contributes to the literature on cryptoeconomics \citep{johnsaleh2025cryptoeconomics}, particularly the literature examining Ethereum's Proof-of-Stake. This literature begins with \citet{saleh2021blockchain}, which studies whether consensus arises in equilibrium. Several prominent works thereafter examine staking as an investment \citep{rocsu2021evolution, jermann2023macro, cong2025tokenomics, johnpowvspos, harvey2024productivity}. We extend that literature by abstracting from microeconomic details and developing general asset pricing restrictions. Using this framework, we are able to use the Ethereum ecosystem to test for market efficiency, highlighting that market efficiency fails.

\section{Institutional Details}
\label{sec:institutional}

This section provides institutional background on Ethereum staking, liquid staking, and decentralized lending. We focus on the mechanisms relevant to our model, particularly those that determine yields and the relationship between ETH and stETH.

\subsection{Ethereum Staking}

The Ethereum blockchain uses a Proof-of-Stake consensus mechanism to secure its ledger \citep{saleh2021blockchain, johnpowvspos}. In Proof-of-Stake, a participant known as a \emph{validator} locks capital in the form of ether (ETH) to participate in consensus. The Ethereum protocol requires a minimum of 32 ETH to become a validator. When selected by the protocol, validators propose blocks of transactions and attest to the validity of blocks proposed by others. In return for performing these duties, validators receive rewards from two sources \citep{john2025economics}. The first source is newly minted ETH, which corresponds to seigniorage. The protocol issues these rewards algorithmically, with the per-validator rate decreasing as more ETH is staked network-wide. The second source is transaction fees. Under Ethereum's EIP-1559 fee mechanism \citep{roughgarden2021transaction}, users pay a base fee (which is burned) plus a priority fee (which goes to validators). Validators also capture \emph{Maximal Extractable Value} (MEV), the profit available from reordering, inserting, or censoring transactions \citep{daian2020flash}.\footnote{Prominent examples of MEV include sandwich attacks \citep{park2023conceptual}, stale price arbitrage \citep{capponijia2025liquidity, lehar2025decentralized}, and JIT liquidity provision \citep{capponijiazhu}.} As an aside, in practice, specialized builders construct blocks to maximize MEV and bid for inclusion via proposer-builder separation mechanisms, transferring MEV to validators through priority fees \citep{schwarzschilling2023time, capponi2024pbs}.

The staking yield $\gamma^{ETH}_t$ in our model corresponds to the rate at which validators accumulate rewards from both seigniorage and priority fees (including MEV). As of late 2025, approximately 34 million ETH is staked, corresponding to an annualized staking yield of approximately 3\%. Importantly, the staking yield varies over time as the amount of staked ETH changes and as transaction fee revenue fluctuates with network activity. Our framework fully accommodates this time-variation, allowing that $\gamma^{ETH}_t$ evolves stochastically.

\subsection{Liquid Staking}

The 32 ETH minimum and the technical requirements of running a validator node present barriers to direct staking. \emph{Staking Service Providers} (SSPs) address these barriers by pooling capital from multiple users and operating validators on their behalf. Users deposit ETH with the SSP and receive a \emph{Liquid Staking Token} (LST) in return. The LST represents a claim on the underlying staked ETH plus accumulated rewards.

Lido is the largest SSP, managing approximately 9.5 million ETH (around 28\% of all staked ETH). Lido's LST is called stETH. Lido's operations are implemented via smart contracts, which are programs deployed on the Ethereum blockchain that execute automatically when called \citep{john2023smart}. When a user deposits ETH with Lido, the following occurs in a single atomic transaction: the user calls the \texttt{submit()} function on Lido's deposit contract,\footnote{Lido stETH contract: \url{https://etherscan.io/address/0xae7ab96520de3a18e5e111b5eaab095312d7fe84}} sending ETH to the contract; the contract mints stETH to the user's address at a one-to-one ratio. The deposited ETH is then allocated to node operators who stake it with the Ethereum protocol.

Lido charges a 10\% fee on all staking rewards, including both seigniorage and priority fees (and hence MEV). The remaining 90\% accrues to stETH holders. In our model, $\gamma^{ETH}_t$ denotes the full protocol staking yield (seigniorage plus priority fees), while $\gamma^{stETH}_t$ denotes the yield passed to stETH holders after Lido's fee. Thus $\gamma^{stETH}_t = 0.9 \times \gamma^{ETH}_t$, and $\gamma^{stETH}_t$ inherits the time-variation of the underlying protocol yield.

Because stETH represents a claim on ETH held by Lido, it trades near parity with ETH. Holders can redeem stETH for ETH by calling the \texttt{requestWithdrawals()} function on Lido's withdrawal contract.\footnote{Lido Withdrawal Queue contract: \url{https://etherscan.io/address/0x889edc2edab5f40e902b864ad4d7ade8e412f9b1}} The redemption is subject to a delay due to Ethereum's withdrawal queue, but once processed, holders receive ETH at the prevailing exchange rate within the protocol. This redeemability anchors the stETH price to ETH.

However, the stETH-ETH exchange rate can deviate from unity. We refer to such deviations as \emph{de-pegging}. In our model, the variable $\chi_t$ captures the log stETH-ETH price ratio, with $\chi_t = 0$ corresponding to parity and $\chi_t < 0$ corresponding to stETH trading at a discount.

\subsection{Decentralized Lending}

Decentralized Lending Protocols (DLPs) enable users to lend and borrow cryptoassets via smart contracts \citep{gudgeon2020defi}. Aave is the largest DLP on Ethereum, with over 40 billion USD in deposits. Users can supply assets to Aave's liquidity pools and earn interest from borrowers.

To supply ETH to Aave, a user calls the \texttt{supply()} function on Aave's pool contract,\footnote{Aave V3 Pool contract: \url{https://etherscan.io/address/0x87870bca3f3fd6335c3f4ce8392d69350b4fa4e2}} specifying the amount and the asset. The contract transfers ETH from the user and mints \emph{aTokens} (specifically, aWETH for ETH) to the user's address. These aTokens represent the user's claim on the supplied assets plus accrued interest. The interest rate paid to suppliers is determined algorithmically based on the utilization rate of the pool: when demand for borrowing is high relative to supply, the interest rate increases; when demand is low, the rate decreases \citep{rivera2023equilibrium}.

Aave also supports stETH as a supplied asset. Users holding stETH can supply it to Aave and earn the stETH lending yield $\psi^{stETH}_t$ in addition to the staking yield $\gamma^{stETH}_t$ that accrues to stETH holders. Similarly, users can supply ETH to earn the ETH lending yield $\psi^{ETH}_t$. The yields $\psi^{ETH}_t$ and $\psi^{stETH}_t$ in our model correspond to the supply rates offered by Aave for ETH and stETH, respectively. Because utilization rates fluctuate with market conditions, these lending yields vary dynamically. Our framework fully accommodates this time-variation, allowing that $\psi^{ETH}_t$ and $\psi^{stETH}_t$ may evolve stochastically.

All interactions with Aave occur through smart contracts, ensuring that the terms of lending are enforced programmatically. Borrowers must post collateral exceeding the value of their loans, and the protocol automatically adjusts interest rates based on market conditions. This transparency and automation allow us to observe lending yields directly from on-chain data.

\section{Economic Model}

We examine an infinite horizon setting where time is indexed by $t \in \{0, 1, 2, \ldots\}$ and $\mathcal{F}_t$ denotes the information available at time $t$. We examine three investment strategies available to ether (ETH) holders in each period: direct staking, lending via decentralized lending protocols, and liquid staking. We first describe the return structure of each strategy, then introduce the pricing framework, and finally derive equilibrium restrictions.

\subsection{Investment Opportunities}

\subsubsection{Direct Staking}

Investors holding ETH can earn yield by staking, which involves locking ETH to participate in Ethereum's proof-of-stake consensus protocol \citep{john2025economics}. The gross return from holding and staking ETH is:
\begin{equation}
R^{ETH,stake}_{t,t+1} = \exp\{\gamma^{ETH}_{t} - \kappa + r^{ETH}_{t,t+1}\}
\label{eqn:cumethreturn}
\end{equation}
where $r^{ETH}_{t,t+1}$ denotes the period-$t$ to period-$t+1$ log price return on ETH, $\gamma^{ETH}_{t}$ denotes the log staking yield over the same period, and $\kappa > 0$ denotes the cost of staking. The investor earns both the price appreciation and the staking yield, net of costs.

\subsubsection{Lending ETH}

Alternatively, investors can deploy ETH in decentralized lending protocols such as Aave. The gross return from holding and lending ETH is:
\begin{equation}
R^{ETH,lend}_{t,t+1} = \exp\{\psi_t^{ETH} + r^{ETH}_{t,t+1}\}
\label{eqn:cumethreturn-yf}
\end{equation}
where $\psi^{ETH}_{t}$ denotes the period-$t$ to period-$t+1$ log lending yield. The investor earns both the price appreciation and the lending yield.

\subsubsection{Liquid Staking}

Investors can obtain exposure to staking yield through liquid staking tokens (LSTs). To acquire stETH, an investor deposits ETH with a staking service such as Lido and receives stETH one-for-one. The staking service stakes the deposited ETH and passes a fraction of the staking yield to stETH holders. Additionally, the investor can lend their stETH in protocols like Aave. The gross return from this combined strategy is:
\begin{equation}
R^{stETH}_{t,t+1} = \exp\{\gamma^{stETH}_t + \psi^{stETH}_{t} + r^{stETH}_{t,t+1}\}
\label{eqn:cumstethreturn}
\end{equation}
where $\gamma^{stETH}_t$ denotes the log staking yield passed to stETH holders (net of the staking service's fee), $\psi^{stETH}_{t}$ denotes the log lending yield on stETH, and $r^{stETH}_{t,t+1}$ denotes the log price return on stETH.

The stETH price return is linked to the ETH price return through the stETH-ETH exchange rate:
\begin{equation}
    r^{stETH}_{t,t+1} = r^{ETH}_{t,t+1} + \chi_{t+1} - \chi_t
    \label{eqn:ethsteth}
\end{equation}
where $\chi_{t} := \log(P^{stETH}_t / P^{ETH}_t)$ is the log stETH-ETH price ratio. Thus, stETH returns inherit ETH price risk plus any changes in the exchange rate.

\subsubsection{Peg Risk}

The stETH-ETH exchange rate $\chi_t$ introduces additional risk. We model $\{\chi_t\}_{t=0}^\infty$ as a two-state Markov chain with states $\{0, -\eta\}$ where $\eta > 0$. The state $\chi_t = 0$ corresponds to parity (stETH trades at par with ETH), while $\chi_t = -\eta$ corresponds to a significant discount. We focus on two states because small deviations from parity are statistically indistinguishable from parity given market microstructure noise. The transition matrix under the physical measure is:
\[
P = \begin{bmatrix}
p_{0,0} & 1 - p_{0,0} \\
1 - p_{\eta,\eta} & p_{\eta,\eta}
\end{bmatrix}
\]
where $p_{0,0}$ is the probability of remaining at parity given the peg currently holds, and $p_{\eta,\eta}$ is the probability of remaining at a discount given the peg is currently broken.

\subsection{Asset Pricing Framework}

We allow for arbitrary prices of risk and specify the pricing kernel as:
\begin{equation}
    \Lambda_{t,t+1} = \exp\left( -r_{t,t+1} - \frac{1}{2}\lambda_{ETH}^2 - \lambda_{ETH} \varepsilon^{ETH}_{t+1} - \lambda_{\chi} \omega_{t+1} + \xi_t(\lambda_{\chi}) \right)
    \label{eqn:kernel}
\end{equation}
where $r_{t,t+1}$ denotes the risk-free rate, $\lambda_{ETH}$ is the market price of ETH risk, and $\lambda_{\chi}$ is the market price of peg risk. The standardized ETH return innovation is $\varepsilon^{ETH}_{t+1} := (r^{ETH}_{t,t+1} - \mu_t^{ETH})/\sqrt{v_t^{ETH}}$, where $\mu_t^{ETH} := \mathbb{E}[r^{ETH}_{t,t+1} | \mathcal{F}_t]$ and $v_t^{ETH} := \text{Var}[r^{ETH}_{t,t+1} | \mathcal{F}_t]$. We assume $\varepsilon^{ETH}_{t+1} | \mathcal{F}_t \sim N(0,1)$ to ease exposition. The peg state indicator is $\omega_{t+1} := \mathbf{1}_{\{\chi_{t+1} = -\eta\}}$, and we assume $\varepsilon^{ETH}_{t+1}$ and $\omega_{t+1}$ are conditionally independent given $\mathcal{F}_t$. The normalizing constant is:
\begin{equation}
    \xi_t(\lambda_{\chi}) := -\log \mathbb{E}[\exp\{-\lambda_{\chi} \omega_{t+1}\} | \mathcal{F}_t]
    \label{eqn:normalizing}
\end{equation}

We impose no arbitrage, implying:
\begin{equation}
    \mathbb{E}[\Lambda_{t,t+1}\, R^i_{t,t+1}~|~\mathcal{F}_t] = 1
    \label{eqn:euler}
\end{equation}
Equation \eqref{eqn:euler} is the fundamental asset pricing equation \citep{harrison1979martingales, duffie2001dynamic}. The pricing kernel $\Lambda_{t,t+1}$ reflects how investors value payoffs across different market outcomes: when $\Lambda_{t,t+1}$ is high, a dollar is worth more to investors than when $\Lambda_{t,t+1}$ is low. Our specification \eqref{eqn:kernel} incorporates two sources of risk: ETH price movements (through $\varepsilon^{ETH}_{t+1}$) and stETH de-pegging (through $\omega_{t+1}$). The parameters $\lambda_{ETH}$ and $\lambda_{\chi}$ govern how these risks are priced. Mechanically, $\Lambda_{t,t+1}$ decreases in $\lambda_{ETH} \varepsilon^{ETH}_{t+1}$: a larger $\lambda_{ETH}$ means the kernel assigns less weight to outcomes where $\varepsilon^{ETH}_{t+1}$ is high. Consequently, an asset whose returns are positively correlated with $\varepsilon^{ETH}_{t+1}$ receives a lower valuation, and must offer a higher expected return in compensation. The same concept applies to peg risk through $\lambda_{\chi}$. We leave both parameters unrestricted.

\subsection{Equilibrium Restrictions}

We now derive equilibrium restrictions by applying the pricing equation \eqref{eqn:euler} to each investment opportunity.

\begin{proposition}[ETH Staking]
\label{prop:eth-staking}
The expected ETH return satisfies:
\begin{equation}
    \mathbb{E}[r_{t,t+1}^{ETH}~|~\mathcal{F}_t] = r_{t,t+1} + \lambda_{ETH} \sqrt{v_{t}^{ETH}} - \gamma^{ETH}_{t} + \kappa - \frac{v_t^{ETH}}{2}
    \label{eqn:eth-staking}
\end{equation}
where $v_t^{ETH} := \text{Var}[r_{t,t+1}^{ETH}~|~\mathcal{F}_t]$.
\end{proposition}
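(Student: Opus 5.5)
The plan is to apply the Euler equation \eqref{eqn:euler} to the direct staking return \eqref{eqn:cumethreturn}, substitute the kernel \eqref{eqn:kernel}, and evaluate the conditional expectation in closed form. The peg-risk factor separates out by conditional independence, and the ETH factor reduces to a Gaussian moment generating function.

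First, write $r^{ETH}_{t,t+1} = \mu^{ETH}_t + \sqrt{v^{ETH}_t}\,\varepsilon^{ETH}_{t+1}$. Then
\[
\Lambda_{t,t+1}R^{ETH,stake}_{t,t+1} = \exp\Big(-r_{t,t+1} - \tfrac12\lambda_{ETH}^2 + \gamma^{ETH}_t - \kappa + \mu^{ETH}_t + (\sqrt{v^{ETH}_t}-\lambda_{ETH})\varepsilon^{ETH}_{t+1}\Big)\cdot \exp\big(-\lambda_\chi\omega_{t+1} + \xi_t(\lambda_\chi)\big).
\]
Every term except $\varepsilon^{ETH}_{t+1}$ and $\omega_{t+1}$ is $\mathcal{F}_t$-measurable. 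I take $r_{t,t+1}$ to be $\mathcal{F}_t$-measurable, as is standard for a risk-free rate, and likewise $\gamma^{ETH}_t$, $\mu^{ETH}_t$ and $v^{ETH}_t$.

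Next, use the conditional independence of $\varepsilon^{ETH}_{t+1}$ and $\omega_{t+1}$ to factor the conditional expectation into a product. By the definition \eqref{eqn:normalizing} of $\xi_t$, the peg factor has conditional expectation exactly $1$. The remaining factor is a standard normal MGF:
\[
\mathbb{E}\big[e^{(\sqrt{v}-\lambda)\varepsilon}\mid\mathcal{F}_t\big] = e^{\frac12(\sqrt{v}-\lambda)^2} = e^{\frac12 v - \lambda\sqrt v + \frac12\lambda^2}.
\]
The $\tfrac12\lambda_{ETH}^2$ terms then cancel.

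Setting the expectation equal to $1$ and taking logs gives
\[
0 = -r_{t,t+1} + \gamma^{ETH}_t - \kappa + \mu^{ETH}_t + \tfrac12 v^{ETH}_t - \lambda_{ETH}\sqrt{v^{ETH}_t}.
\]
Solving for $\mu^{ETH}_t$ yields \eqref{eqn:eth-staking}.

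The calculation is routine. The only step needing care is justifying the factorization: the peg-risk term is not independent of $\mathcal{F}_t$, but it is conditionally independent of $\varepsilon^{ETH}_{t+1}$ given $\mathcal{F}_t$, and $\xi_t$ is normalized precisely so that its conditional expectation is one. The $\mathcal{F}_t$-measurability of $r_{t,t+1}$ also needs to be stated explicitly.
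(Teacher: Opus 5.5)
Your proof is correct and follows the paper's argument: the paper proves a general lemma for returns of the form $\exp\{y_t + r^{ETH}_{t,t+1}\}$ using the same factorization by conditional independence, the normalization of $\xi_t(\lambda_\chi)$, and the Gaussian moment generating function, and then sets $y_t = \gamma^{ETH}_t - \kappa$. Your explicit remark that $r_{t,t+1}$ must be $\mathcal{F}_t$-measurable states an assumption the paper uses without comment when it factors that term out of the conditional expectation.
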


The term $\lambda_{ETH} \sqrt{v_t^{ETH}}$ is the risk premium: investors require compensation for bearing ETH price risk. The staking yield $\gamma^{ETH}_t$ reduces the required price return (staking provides additional compensation), while staking costs $\kappa$ increase it. The term $v_t^{ETH}/2$ is a convexity adjustment.

\begin{proposition}[ETH Lending]
\label{prop:eth-lending}
The expected ETH return satisfies:
\begin{equation}
    \mathbb{E}[r_{t,t+1}^{ETH}~|~\mathcal{F}_t] = r_{t,t+1} + \lambda_{ETH} \sqrt{v_{t}^{ETH}} - \psi_t^{ETH} - \frac{v_t^{ETH}}{2}
    \label{eqn:eth-lending}
\end{equation}
\end{proposition}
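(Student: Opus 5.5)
The plan is to apply the pricing equation \eqref{eqn:euler} to the lending return $R^{ETH,lend}_{t,t+1}$ from \eqref{eqn:cumethreturn-yf} and reduce it to a Gaussian moment-generating-function computation. This mirrors the argument for Proposition~\ref{prop:eth-staking}. The only change is replacing $\gamma^{ETH}_t-\kappa$ with $\psi^{ETH}_t$, so I will either redo that computation or invoke it directly with this substitution.

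Write $r^{ETH}_{t,t+1}=\mu^{ETH}_t+\sqrt{v^{ETH}_t}\,\varepsilon^{ETH}_{t+1}$. The quantities $r_{t,t+1}$, $\psi^{ETH}_t$, $\mu^{ETH}_t$, $v^{ETH}_t$ and $\xi_t(\lambda_\chi)$ are $\mathcal{F}_t$-measurable, so they factor out of the conditional expectation. The product $\Lambda_{t,t+1}R^{ETH,lend}_{t,t+1}$ then equals $\exp\{-r_{t,t+1}-\tfrac12\lambda_{ETH}^2+\psi^{ETH}_t+\mu^{ETH}_t+\xi_t(\lambda_\chi)\}$ times $\exp\{(\sqrt{v^{ETH}_t}-\lambda_{ETH})\varepsilon^{ETH}_{t+1}\}$ times $\exp\{-\lambda_\chi\omega_{t+1}\}$.

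I use the conditional independence of $\varepsilon^{ETH}_{t+1}$ and $\omega_{t+1}$ given $\mathcal{F}_t$ to split the expectation of the last two factors into a product. The peg factor has expectation $\mathbb{E}[e^{-\lambda_\chi\omega_{t+1}}\mid\mathcal{F}_t]=e^{-\xi_t(\lambda_\chi)}$ by \eqref{eqn:normalizing}, which cancels the $e^{\xi_t}$ term exactly. For the Gaussian factor, $\varepsilon^{ETH}_{t+1}\mid\mathcal{F}_t\sim N(0,1)$ gives
\[
\mathbb{E}\bigl[e^{(\sqrt{v^{ETH}_t}-\lambda_{ETH})\varepsilon^{ETH}_{t+1}}\mid\mathcal{F}_t\bigr]=\exp\Bigl\{\tfrac12\bigl(\sqrt{v^{ETH}_t}-\lambda_{ETH}\bigr)^2\Bigr\}.
\]

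Setting the product equal to $1$ and taking logs gives
\[
-r_{t,t+1}-\tfrac12\lambda_{ETH}^2+\psi^{ETH}_t+\mu^{ETH}_t+\tfrac12 v^{ETH}_t-\lambda_{ETH}\sqrt{v^{ETH}_t}+\tfrac12\lambda_{ETH}^2=0.
\]
The $\lambda_{ETH}^2$ terms cancel, and solving for $\mu^{ETH}_t=\mathbb{E}[r^{ETH}_{t,t+1}\mid\mathcal{F}_t]$ yields \eqref{eqn:eth-lending}.

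The main point requiring care is bookkeeping rather than a real obstacle. It consists of justifying that $\psi^{ETH}_t$ and $v^{ETH}_t$ are known at time $t$, and of using the factorization of the peg component so that the normalization $\xi_t$ cancels exactly. Both follow from the modeling assumptions stated before Proposition~\ref{prop:eth-staking}.
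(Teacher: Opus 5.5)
Your proposal is correct and follows the paper's argument. The paper proves a general lemma for returns of the form $\exp\{y_t + r^{ETH}_{t,t+1}\}$ with $\mathcal{F}_t$-measurable $y_t$ and then sets $y_t = \psi^{ETH}_t$; that lemma uses the same steps you carry out directly (factoring out known terms, splitting the expectation by conditional independence, cancelling $\xi_t(\lambda_\chi)$ via its normalization, and applying the Gaussian MGF), and your algebra, including the cancellation of the $\lambda_{ETH}^2$ terms, checks out.
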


The interpretation parallels Proposition \ref{prop:eth-staking}: the lending yield $\psi_t^{ETH}$ substitutes for the staking yield as an additional source of return.

\begin{proposition}[stETH Investment]
\label{prop:steth}
The expected ETH return satisfies:
\begin{equation}
    \mathbb{E}[r_{t,t+1}^{ETH}~|~\mathcal{F}_t] = r_{t,t+1} + \lambda_{ETH} \sqrt{v_t^{ETH}} + \tilde{\eta}_t - \gamma^{stETH}_t - \psi_t^{stETH} - \frac{v_t^{ETH}}{2}
    \label{eqn:steth}
\end{equation}
where $\tilde{\eta}_t$ is the risk-adjusted peg premium:
\begin{equation}
    \tilde{\eta}_t = \begin{cases}
        \log{\frac{1}{e^{-\eta} + \tilde{p}_{0,0} (1 - e^{-\eta})}} & \text{if } \chi_t = 0\\[6pt]
        \log{\frac{1}{e^{\eta} - \tilde{p}_{\eta,\eta} ( e^{\eta} - 1)}} & \text{if } \chi_t = -\eta
    \end{cases}
    \label{eqn:peg-premium}
\end{equation}
and the risk-adjusted transition probabilities, $\tilde{p}_{0,0}$ and $\tilde{p}_{\eta,\eta}$, are:
\begin{equation}
    \tilde{p}_{0,0} = \frac{p_{0,0}}{p_{0,0} + (1-p_{0,0})e^{-\lambda_{\chi}}}
    \label{eqn:p00-tilde}
\end{equation}
\begin{equation}
    \tilde{p}_{\eta,\eta} = \frac{p_{\eta,\eta} e^{-\lambda_{\chi}}}{p_{\eta,\eta} e^{-\lambda_{\chi}} + (1-p_{\eta,\eta})}
    \label{eqn:pee-tilde}
\end{equation}
\end{proposition}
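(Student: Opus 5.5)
The plan is to substitute the liquid-staking return \eqref{eqn:cumstethreturn} into the pricing equation \eqref{eqn:euler}, using the link \eqref{eqn:ethsteth} and the kernel \eqref{eqn:kernel}. This is the same route that presumably yields Propositions \ref{prop:eth-staking} and \ref{prop:eth-lending}, with one extra factor for the peg. Writing $r^{stETH}_{t,t+1}=r^{ETH}_{t,t+1}+\chi_{t+1}-\chi_t$ and noting that $\chi_{t+1}=-\eta\,\omega_{t+1}$, the product $\Lambda_{t,t+1}R^{stETH}_{t,t+1}$ splits into three pieces. The first is an $\mathcal{F}_t$-measurable factor $\exp\{-r_{t,t+1}-\tfrac12\lambda_{ETH}^2+\gamma^{stETH}_t+\psi^{stETH}_t-\chi_t+\xi_t(\lambda_\chi)\}$. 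The second is a factor depending only on $\varepsilon^{ETH}_{t+1}$, namely $\exp\{r^{ETH}_{t,t+1}-\lambda_{ETH}\varepsilon^{ETH}_{t+1}\}$. The third is a factor depending only on $\omega_{t+1}$, namely $\exp\{-(\lambda_\chi+\eta)\omega_{t+1}\}$. By the assumed conditional independence of $\varepsilon^{ETH}_{t+1}$ and $\omega_{t+1}$, the conditional expectation factorizes into the product of the two conditional expectations.

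For the ETH factor, I write $r^{ETH}_{t,t+1}=\mu^{ETH}_t+\sqrt{v^{ETH}_t}\,\varepsilon^{ETH}_{t+1}$ and use the normal moment generating function. This gives $\exp\{\mu^{ETH}_t+\tfrac12(\sqrt{v^{ETH}_t}-\lambda_{ETH})^2\}$. Combined with the $-\tfrac12\lambda_{ETH}^2$ in the kernel, it leaves $\exp\{\mu^{ETH}_t+\tfrac12 v^{ETH}_t-\lambda_{ETH}\sqrt{v^{ETH}_t}\}$, exactly as in Proposition \ref{prop:eth-lending}.

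For the peg factor, I define $\tilde\eta_t$ by
\[
e^{-\tilde\eta_t} := e^{\xi_t(\lambda_\chi)-\chi_t}\,\mathbb{E}[e^{-(\lambda_\chi+\eta)\omega_{t+1}}\mid\mathcal{F}_t]
=\frac{e^{-\chi_t}\,\mathbb{E}[e^{-\lambda_\chi\omega_{t+1}}e^{\chi_{t+1}}\mid\mathcal{F}_t]}{\mathbb{E}[e^{-\lambda_\chi\omega_{t+1}}\mid\mathcal{F}_t]}.
\]
This ratio is an expectation of $e^{\chi_{t+1}-\chi_t}$ under the tilted transition probabilities. By the Markov property it depends only on $\chi_t$, so I split into the two cases.

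\textbf{Case $\chi_t=0$.} The ratio is
\[
\frac{p_{0,0}+(1-p_{0,0})e^{-\lambda_\chi}e^{-\eta}}{p_{0,0}+(1-p_{0,0})e^{-\lambda_\chi}}=\tilde p_{0,0}+(1-\tilde p_{0,0})e^{-\eta},
\]
with $\tilde p_{0,0}$ as in \eqref{eqn:p00-tilde}. This equals $e^{-\eta}+\tilde p_{0,0}(1-e^{-\eta})$.

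\textbf{Case $\chi_t=-\eta$.} The ratio is
\[
e^{\eta}\,\frac{(1-p_{\eta,\eta})+p_{\eta,\eta}e^{-\lambda_\chi}e^{-\eta}}{(1-p_{\eta,\eta})+p_{\eta,\eta}e^{-\lambda_\chi}}=e^{\eta}\bigl[(1-\tilde p_{\eta,\eta})+\tilde p_{\eta,\eta}e^{-\eta}\bigr],
\]
with $\tilde p_{\eta,\eta}$ as in \eqref{eqn:pee-tilde}. This equals $e^{\eta}-\tilde p_{\eta,\eta}(e^{\eta}-1)$.

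Taking minus the log in each case reproduces \eqref{eqn:peg-premium}.

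Finally, setting the product equal to one and taking logs gives
\[
-r_{t,t+1}+\gamma^{stETH}_t+\psi^{stETH}_t+\mu^{ETH}_t+\tfrac12 v^{ETH}_t-\lambda_{ETH}\sqrt{v^{ETH}_t}-\tilde\eta_t=0.
\]
Solving for $\mu^{ETH}_t$ yields \eqref{eqn:steth}.

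The only step requiring care is the bookkeeping of the peg term. One must remember that $\xi_t(\lambda_\chi)$ supplies exactly the denominator that turns the physical probabilities into the tilted ones. One must also include the $e^{-\chi_t}$ factor, which produces the $e^{\eta}$ scaling in the discount state. Everything else is the same Gaussian computation already used for Propositions \ref{prop:eth-staking}--\ref{prop:eth-lending}.
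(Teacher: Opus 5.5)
Your proposal is correct and follows essentially the same route as the paper. Both arguments substitute the stETH--ETH link and the kernel into the Euler equation, factor the conditional expectation by independence, evaluate the ETH piece with the Gaussian moment generating function, and compute the peg piece state by state, with $e^{\xi_t(\lambda_\chi)}$ turning the physical probabilities into the tilted ones. The only difference is bookkeeping: the paper isolates the peg computation as a separate lemma, $\mathbb{E}[\exp\{\chi_{t+1}-\lambda_\chi\omega_{t+1}+\xi_t(\lambda_\chi)\}\mid\mathcal{F}_t]=\exp\{\chi_t-\tilde{\eta}_t\}$, whereas you fold $e^{\xi_t(\lambda_\chi)-\chi_t}$ directly into your definition of $e^{-\tilde{\eta}_t}$, which is the same identity.
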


The term $\tilde{\eta}_t$ compensates investors for peg risk. When the peg holds ($\chi_t = 0$), investors face the risk of a future de-peg; when the peg is broken ($\chi_t = -\eta$), they face uncertainty about recovery. The risk-adjusted probabilities $\tilde{p}_{0,0}$ and $\tilde{p}_{\eta,\eta}$ incorporate the market price of peg risk $\lambda_{\chi}$.

\subsection{Implied Equilibrium Relationships}

Combining the restrictions from Propositions \ref{prop:eth-staking}--\ref{prop:steth} yields testable equilibrium relationships.

\begin{proposition}[stETH Lending vs.\ ETH Staking]
\label{prop:steth-staking}
In equilibrium:
\begin{equation}
    \psi_t^{stETH} = \gamma^{ETH}_{t} - \gamma^{stETH}_t + \tilde{\eta}_t - \kappa
    \label{eqn:prop4}
\end{equation}
\end{proposition}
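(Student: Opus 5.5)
The argument only needs Propositions \ref{prop:eth-staking} and \ref{prop:steth}. Each gives an expression for the same conditional moment $\mathbb{E}[r^{ETH}_{t,t+1}\mid\mathcal{F}_t]$, since both are consequences of the Euler equation \eqref{eqn:euler} under the common kernel \eqref{eqn:kernel}. Equating the right-hand sides of \eqref{eqn:eth-staking} and \eqref{eqn:steth} gives
\begin{equation*}
r_{t,t+1} + \lambda_{ETH}\sqrt{v_t^{ETH}} - \gamma^{ETH}_t + \kappa - \frac{v_t^{ETH}}{2}
= r_{t,t+1} + \lambda_{ETH}\sqrt{v_t^{ETH}} + \tilde{\eta}_t - \gamma^{stETH}_t - \psi^{stETH}_t - \frac{v_t^{ETH}}{2}.
\end{equation*}

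In this equation the risk-free rate, the ETH risk premium $\lambda_{ETH}\sqrt{v_t^{ETH}}$ and the convexity term $v_t^{ETH}/2$ appear identically on both sides and cancel. Solving the remaining relation for $\psi^{stETH}_t$ yields $\psi^{stETH}_t = \gamma^{ETH}_t - \gamma^{stETH}_t + \tilde{\eta}_t - \kappa$, which is \eqref{eqn:prop4}. The cancellation is the substantive point: the relation holds for any value of $\lambda_{ETH}$. Moreover, $\lambda_\chi$ enters only through $\tilde{\eta}_t$, via the risk-adjusted probabilities \eqref{eqn:p00-tilde}--\eqref{eqn:pee-tilde}, so the restriction is valid for arbitrary pricing of both risks.

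The only step needing care is the legitimacy of equating the two expressions. Both Propositions must be stated for the same conditioning information $\mathcal{F}_t$, the same kernel and the same $v_t^{ETH}$. This holds because $\varepsilon^{ETH}_{t+1}$ and $\omega_{t+1}$ are conditionally independent, so the peg component of the kernel does not alter the ETH moment computations. In particular, the normalization $\xi_t(\lambda_\chi)$ makes $\mathbb{E}[e^{-\lambda_\chi\omega_{t+1}+\xi_t}\mid\mathcal{F}_t]=1$, and this factor therefore drops out of the staking Euler equation, which does not involve $\omega_{t+1}$. The result should also be read statewise in $\chi_t$: $\tilde{\eta}_t$ takes the value in \eqref{eqn:peg-premium} corresponding to the current peg state, and the identity holds separately in each state.
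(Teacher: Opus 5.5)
Your proof is correct and follows the paper's own argument: the paper likewise equates the right-hand sides of \eqref{eqn:eth-staking} and \eqref{eqn:steth}, cancels the common terms, and rearranges to obtain \eqref{eqn:prop4}. Your remarks on why the two expressions may be equated are consistent with the paper's proofs of Propositions~\ref{prop:eth-staking} and~\ref{prop:steth}. These remarks cover the common kernel, the normalization of the peg factor, and reading $\tilde{\eta}_t$ state by state. One small wording fix: the peg factor drops out because the staking \emph{return} does not involve $\omega_{t+1}$, not because the Euler equation doesn't (the kernel still contains it).
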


This equation states that the stETH lending yield equals the ETH staking yield, minus the portion passed to stETH holders, plus compensation for peg risk, minus staking costs (which stETH holders avoid).

\begin{proposition}[stETH Lending vs.\ ETH Lending]
\label{prop:steth-lending}
In equilibrium:
\begin{equation}
    \psi_t^{stETH} = \psi^{ETH}_{t} - \gamma^{stETH}_t + \tilde{\eta}_t
    \label{eqn:prop5}
\end{equation}
\end{proposition}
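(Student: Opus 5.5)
The plan is to equate the two expressions for the conditional expected ETH log return given by Proposition \ref{prop:eth-lending} and Proposition \ref{prop:steth}. Both are derived from the Euler equation \eqref{eqn:euler} under the same pricing kernel \eqref{eqn:kernel}, and both express the same object, $\mathbb{E}[r^{ETH}_{t,t+1}\,|\,\mathcal{F}_t]$. In equilibrium, that is when no arbitrage holds simultaneously for the ETH lending return \eqref{eqn:cumethreturn-yf} and the liquid staking return \eqref{eqn:cumstethreturn}, the two right-hand sides must coincide.

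Concretely, I will set
\[
r_{t,t+1} + \lambda_{ETH}\sqrt{v_t^{ETH}} - \psi_t^{ETH} - \tfrac{v_t^{ETH}}{2}
= r_{t,t+1} + \lambda_{ETH}\sqrt{v_t^{ETH}} + \tilde{\eta}_t - \gamma^{stETH}_t - \psi^{stETH}_t - \tfrac{v_t^{ETH}}{2}.
\]
The risk-free rate, the ETH risk premium $\lambda_{ETH}\sqrt{v_t^{ETH}}$, and the convexity term $v_t^{ETH}/2$ appear identically on both sides and cancel. This cancellation is what makes the restriction hold for arbitrary $\lambda_{ETH}$. What remains is $-\psi_t^{ETH} = \tilde{\eta}_t - \gamma^{stETH}_t - \psi^{stETH}_t$, and rearranging gives \eqref{eqn:prop5}.

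The main point requiring care is that the two propositions really refer to the same conditional moments. Both use the same $\mu_t^{ETH}$ and $v_t^{ETH}$, since $\varepsilon^{ETH}_{t+1}$ is standardized ETH risk in each case. Also, $\tilde{\eta}_t$ is $\mathcal{F}_t$-measurable, because it depends only on $\chi_t$ and on $\lambda_\chi$, $p_{0,0}$, $p_{\eta,\eta}$, $\eta$. This is where the conditional independence of $\varepsilon^{ETH}_{t+1}$ and $\omega_{t+1}$ enters: it lets the peg factor in the stETH Euler equation separate out as the additive term $\tilde{\eta}_t$, leaving the ETH part identical to the lending case. Since Proposition \ref{prop:steth} is assumed, this separation is already built in, and the remaining argument is purely algebraic. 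The same computation using Proposition \ref{prop:eth-staking} in place of Proposition \ref{prop:eth-lending} would yield Proposition \ref{prop:steth-staking}. As a cross-check, subtracting the two restrictions should give $\psi_t^{ETH} = \gamma_t^{ETH} - \kappa$.
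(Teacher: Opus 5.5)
Your proposal is correct and matches the paper's proof: equate the right-hand sides of \eqref{eqn:eth-lending} and \eqref{eqn:steth}, cancel the common $r_{t,t+1}$, $\lambda_{ETH}\sqrt{v_t^{ETH}}$ and $v_t^{ETH}/2$ terms, and rearrange to obtain \eqref{eqn:prop5}. Your cross-check is also right: the two restrictions together imply $\psi_t^{ETH} = \gamma_t^{ETH} - \kappa$, which agrees with Propositions \ref{prop:eth-staking} and \ref{prop:eth-lending}.
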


This equation states that the stETH lending yield equals the ETH lending yield, minus the stETH staking yield (which stETH lenders also receive), plus compensation for peg risk. Crucially, staking costs $\kappa$ do not appear because neither ETH lenders nor stETH lenders bear these costs directly.

\section{Empirical Analysis}

We test the equilibrium relationships derived in Propositions \ref{prop:steth-staking} and \ref{prop:steth-lending}. Both propositions imply exact relationships between observable yields. We find that the data strongly reject these relationships, indicating market inefficiency.

\subsection{Data}

We obtain daily data from three sources. Lending yields for ETH and stETH are from Aavescan.\footnote{Aavescan: \url{https://aavescan.com/}. Accessed September 2025.} Staking yields are from Dune Analytics.\footnote{Dune query: \url{https://dune.com/queries/570874/1464690}. Accessed September 2025.} The stETH/ETH exchange rate is computed from daily prices on CoinGecko.\footnote{CoinGecko historical price data: \url{https://www.coingecko.com/en/coins/lido-staked-ether} and \url{https://www.coingecko.com/en/coins/ethereum}. Accessed September 2025.} Our sample spans January 30, 2023 to September 21, 2025, yielding 966 daily observations. Table \ref{tab:summary} presents summary statistics:

\begin{table}[H]
\centering
\caption{Summary Statistics}
\label{tab:summary}
\begin{tabular}{lcccc}
\hline\hline
& Mean & Std.\ Dev. & Min & Max \\
\hline
$\psi^{ETH}$ (ETH lending yield) & 1.92\% & 0.65\% & 0.95\% & 17.80\% \\
$\psi^{stETH}$ (stETH lending yield) & 0.05\% & 0.07\% & 0.00\% & 0.67\% \\
$\gamma^{ETH}$ (ETH staking yield) & 3.29\% & 0.64\% & 2.18\% & 12.27\% \\
$\gamma^{stETH}$ (Lido staking yield) & 2.96\% & 0.58\% & 1.97\% & 11.05\% \\
stETH/ETH ratio & 0.999 & 0.001 & 0.990 & 1.015 \\
\hline
Observations & \multicolumn{4}{c}{966} \\
Sample period & \multicolumn{4}{c}{January 30, 2023 -- September 21, 2025} \\
\hline\hline
\end{tabular}
\vspace{0.3cm}
\begin{minipage}{0.9\textwidth}
\footnotesize
\textit{Notes:} Yields are annualized. $\psi^{ETH}$ and $\psi^{stETH}$ are lending yields from Aave. $\gamma^{ETH}$ is the Ethereum protocol staking yield. $\gamma^{stETH}$ is the staking yield passed to Lido stETH holders (after Lido's 10\% fee). The stETH/ETH ratio is the daily average exchange rate.
\end{minipage}
\end{table}

\subsection{Methodology}

Given measurement error in the data, Proposition \ref{prop:steth-staking} implies:
\begin{equation}
    \psi_t^{stETH} = \alpha + \beta \left( \gamma^{ETH}_{t} - \gamma^{stETH}_t \right) + \epsilon_t
    \label{eqn:reg-prop4}
\end{equation}
where $\beta = 1$. Thus, to test market efficiency, we test:
\begin{equation}
    H_0: \beta = 1 \quad \text{versus} \quad H_A: \beta \neq 1
    \label{eqn:hyp-prop4}
\end{equation}

Note that the regressor $\gamma^{ETH} - \gamma^{stETH}$ is the spread between the protocol staking yield and the yield that Lido passes through to stETH holders. Consider an investor deciding between two strategies: direct staking, which earns the full protocol yield $\gamma^{ETH}$, or liquid staking through Lido, which earns the Lido yield $\gamma^{stETH}$ plus whatever can be earned by lending the stETH at Aave, $\psi^{stETH}$. In equilibrium, these strategies must offer comparable returns. If the spread $\gamma^{ETH} - \gamma^{stETH}$ increases, then direct staking becomes more attractive. For liquid staking to remain competitive, the stETH lending yield $\psi^{stETH}$ must rise by the same amount. If it does not, capital will flow out of liquid staking and into direct staking until yields adjust. More formally, this one-for-one relationship is an implication from Proposition \ref{prop:steth-staking}.

Similarly, Proposition \ref{prop:steth-lending} implies:
\begin{equation}
    \psi_t^{stETH} = \alpha + \beta \left( \psi^{ETH}_{t} - \gamma^{stETH}_t \right) + \epsilon_t
    \label{eqn:reg-prop5}
\end{equation}
where $\beta = 1$. Thus, to test market efficiency, we test:
\begin{equation}
    H_0: \beta = 1 \quad \text{versus} \quad H_A: \beta \neq 1
    \label{eqn:hyp-prop5}
\end{equation}

The regressor $\psi^{ETH} - \gamma^{stETH}$ compares the ETH lending yield to the staking component of the liquid staking strategy. Consider an investor choosing between lending ETH at Aave, which earns $\psi^{ETH}$, or pursuing the liquid staking strategy, which earns the Lido staking yield $\gamma^{stETH}$ plus the stETH lending yield $\psi^{stETH}$. If $\psi^{ETH} - \gamma^{stETH}$ increases, then direct ETH lending becomes more attractive relative to the staking component of liquid staking. For the liquid staking strategy to remain competitive, the stETH lending yield must rise by the same amount. More formally, this one-for-one adjustment is an implication from Proposition \ref{prop:steth-lending}.

Since our model allows for arbitrary risk pricing, a rejection of $\beta = 1$ cannot be attributed to mispriced risk. Rather, it implies a failure of market efficiency. We estimate both regressions with heteroskedasticity and autocorrelation consistent (HAC) standard errors using the Newey-West estimator with 10 lags.

\subsection{Results}

Table \ref{tab:regression} presents our results. Both tests strongly reject market efficiency. We discuss each result in turn.

\begin{table}[htbp]
\centering
\caption{Tests of Market Efficiency}
\label{tab:regression}
\begin{tabular}{lcc}
\hline\hline
& Proposition \ref{prop:steth-staking} & Proposition \ref{prop:steth-lending} \\
& $\gamma^{ETH} - \gamma^{stETH}$ & $\psi^{ETH} - \gamma^{stETH}$ \\
\hline
\\[-0.8em]
$\hat{\beta}$ & $-0.228$ & $0.017$ \\
& $(0.043)$ & $(0.005)$ \\[0.5em]
$\hat{\alpha}$ (annualized) & $0.12\%$ & $0.06\%$ \\
& $(0.02\%)$ & $(0.01\%)$ \\[0.5em]
\hline
\\[-0.8em]
$t$-statistic ($H_0: \beta = 1$) & $-28.33$ & $-179.22$ \\
$p$-value & $<0.001$ & $<0.001$ \\[0.3em]
$R^2$ & $0.143$ & $0.120$ \\
Observations & 966 & 966 \\
\hline\hline
\end{tabular}
\vspace{0.3cm}
\begin{minipage}{0.9\textwidth}
\footnotesize
\textit{Notes:} The dependent variable is the daily stETH lending yield $\psi^{stETH}$. HAC standard errors (Newey-West, 10 lags) in parentheses. Under market efficiency, $\beta = 1$. Both tests reject this hypothesis at the 1\% level.
\end{minipage}
\end{table}

For the test of Proposition \ref{prop:steth-staking}, the estimated coefficient is $\hat{\beta} = -0.228$, which is not only significantly different from unity but is negative. This implies that when the staking yield differential $\gamma^{ETH} - \gamma^{stETH}$ increases, the stETH lending yield $\psi^{stETH}$ actually decreases, contrary to the theoretical prediction.

For the test of Proposition \ref{prop:steth-lending}, the estimated coefficient is $\hat{\beta} = 0.017$. While positive, this is far below the predicted value of unity. A one percentage point increase in $\psi^{ETH} - \gamma^{stETH}$ is associated with only a 0.017 percentage point increase in $\psi^{stETH}$, rather than the one-for-one relationship implied by market efficiency.

\section{Conclusion}

This paper demonstrates market inefficiency in cryptoasset markets while allowing for arbitrary risk pricing. We examine three investments: direct staking, lending via decentralized protocols, and liquid staking. Allowing for arbitrary risk pricing, we derive equilibrium restrictions that must hold regardless of how investors price risk. Our empirical tests strongly reject these equilibrium restrictions, implying market inefficiency. Ultimately, our findings suggest the presence of market frictions that impede capital reallocation across investments. The nature of these frictions remains an open question for future research.

\bibliographystyle{jfe}
\bibliography{references}

\begin{appendices}

\section{Proofs}

\begin{lemma}
\label{lem:eth-pricing}
Let $y_t$ be $\mathcal{F}_t$-measurable. If the return $R_{t,t+1} = \exp\{y_t + r^{ETH}_{t,t+1}\}$ satisfies the Euler equation \eqref{eqn:euler}, then:
\begin{equation}
    \mathbb{E}[r^{ETH}_{t,t+1} | \mathcal{F}_t] = r_{t,t+1} + \lambda_{ETH}\sqrt{v_t^{ETH}} - y_t - \frac{v_t^{ETH}}{2}.
    \label{eqn:lemma-eth}
\end{equation}
\end{lemma}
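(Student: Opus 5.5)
The plan is to substitute the kernel \eqref{eqn:kernel} and the return $R_{t,t+1}=\exp\{y_t + r^{ETH}_{t,t+1}\}$ into the Euler equation \eqref{eqn:euler}. We then pull out every $\mathcal{F}_t$-measurable factor and evaluate the remaining conditional expectation using the two distributional assumptions: conditional normality of $\varepsilon^{ETH}_{t+1}$, and its conditional independence from $\omega_{t+1}$.

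First, we write $r^{ETH}_{t,t+1} = \mu_t^{ETH} + \sqrt{v_t^{ETH}}\,\varepsilon^{ETH}_{t+1}$. The product $\Lambda_{t,t+1}R_{t,t+1}$ then becomes
\[
\exp\Bigl\{-r_{t,t+1} - \tfrac12\lambda_{ETH}^2 + \xi_t(\lambda_\chi) + y_t + \mu_t^{ETH}\Bigr\}\cdot \exp\Bigl\{\bigl(\sqrt{v_t^{ETH}}-\lambda_{ETH}\bigr)\varepsilon^{ETH}_{t+1}\Bigr\}\cdot \exp\{-\lambda_\chi\omega_{t+1}\}.
\]
Here we use that $r_{t,t+1}$, $y_t$, $\mu_t^{ETH}$, $v_t^{ETH}$ and $\xi_t$ are all known at $t$. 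The risk-free rate $r_{t,t+1}$ must be treated as $\mathcal{F}_t$-measurable, as is implicit in the paper.

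Second, conditional independence lets the expectation of the last two factors factor into a product. The peg factor has conditional expectation $\exp\{-\xi_t(\lambda_\chi)\}$ by the definition \eqref{eqn:normalizing}, so it cancels exactly against the $\xi_t$ term; this is the role of the normalizing constant. The ETH factor is a lognormal moment: since $\varepsilon^{ETH}_{t+1}\mid\mathcal{F}_t\sim N(0,1)$, its conditional expectation is $\exp\{\tfrac12(\sqrt{v_t^{ETH}}-\lambda_{ETH})^2\}$.

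Third, we set the product equal to one and take logs, giving
\[
0=-r_{t,t+1}-\tfrac12\lambda_{ETH}^2+y_t+\mu_t^{ETH}+\tfrac12 v_t^{ETH}-\lambda_{ETH}\sqrt{v_t^{ETH}}+\tfrac12\lambda_{ETH}^2 .
\]
The $\lambda_{ETH}^2$ terms cancel, and solving for $\mu_t^{ETH}$ yields \eqref{eqn:lemma-eth}.

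There is no real obstacle. The only points requiring care are justifying the factorization of the conditional expectation via independence, and keeping the sign of the cross term $-\lambda_{ETH}\sqrt{v_t^{ETH}}$, which becomes the positive risk premium.
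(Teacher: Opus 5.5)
Your proposal is correct and follows the paper's proof step for step: substitute the kernel and write $r^{ETH}_{t,t+1}=\mu_t^{ETH}+\sqrt{v_t^{ETH}}\,\varepsilon^{ETH}_{t+1}$, factor out the $\mathcal{F}_t$-measurable terms, split the conditional expectation by independence, remove the peg factor with the normalizing constant and evaluate the ETH factor with the normal moment generating function, then take logs and solve for $\mu_t^{ETH}$. The only cosmetic difference is that you pull $e^{\xi_t(\lambda_\chi)}$ outside and cancel it against $\mathbb{E}[e^{-\lambda_\chi\omega_{t+1}}\mid\mathcal{F}_t]$, whereas the paper keeps it inside and uses that the combined factor has conditional mean one.
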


\begin{proof}
The Euler equation \eqref{eqn:euler} requires:
\begin{equation}
    \mathbb{E}\left[\Lambda_{t,t+1} \exp\{y_t + r^{ETH}_{t,t+1}\} ~\Big|~ \mathcal{F}_t\right] = 1.
    \label{eqn:lem-euler}
\end{equation}
Substituting the pricing kernel \eqref{eqn:kernel} and factoring out $\mathcal{F}_t$-measurable terms:
\begin{equation}
    \exp\left\{-r_{t,t+1} - \tfrac{1}{2}\lambda_{ETH}^2 + y_t\right\} \cdot \mathbb{E}\left[\exp\left\{r^{ETH}_{t,t+1} - \lambda_{ETH} \varepsilon^{ETH}_{t+1} - \lambda_{\chi} \omega_{t+1} + \xi_t(\lambda_{\chi})\right\} ~\Big|~ \mathcal{F}_t\right] = 1.
    \label{eqn:lem-factor}
\end{equation}
By the definition of $\varepsilon^{ETH}_{t+1}$, we have $r^{ETH}_{t,t+1} = \mu_t^{ETH} + \sqrt{v_t^{ETH}}\, \varepsilon^{ETH}_{t+1}$. Substituting:
\[
r^{ETH}_{t,t+1} - \lambda_{ETH} \varepsilon^{ETH}_{t+1} = \mu_t^{ETH} + (\sqrt{v_t^{ETH}} - \lambda_{ETH})\varepsilon^{ETH}_{t+1}.
\]
By independence of $\varepsilon^{ETH}_{t+1}$ and $\omega_{t+1}$, the conditional expectation in \eqref{eqn:lem-factor} factors:
\[
    \exp\{\mu_t^{ETH}\} \cdot \mathbb{E}\left[\exp\{(\sqrt{v_t^{ETH}} - \lambda_{ETH})\varepsilon^{ETH}_{t+1}\} ~\Big|~ \mathcal{F}_t\right] \cdot \mathbb{E}\left[\exp\{-\lambda_{\chi} \omega_{t+1} + \xi_t(\lambda_{\chi})\} ~\Big|~ \mathcal{F}_t\right].
\]
Since $\varepsilon^{ETH}_{t+1} | \mathcal{F}_t \sim N(0,1)$, the moment generating function gives:
\[
\mathbb{E}\left[\exp\{(\sqrt{v_t^{ETH}} - \lambda_{ETH})\varepsilon^{ETH}_{t+1}\} ~\Big|~ \mathcal{F}_t\right] = \exp\left\{\tfrac{1}{2}(\sqrt{v_t^{ETH}} - \lambda_{ETH})^2\right\}.
\]
The normalizing constant $\xi_t(\lambda_{\chi})$ is defined such that $\mathbb{E}[\exp\{-\lambda_{\chi} \omega_{t+1} + \xi_t(\lambda_{\chi})\} | \mathcal{F}_t] = 1$. Substituting these results into \eqref{eqn:lem-factor} and taking logarithms:
\[
-r_{t,t+1} - \tfrac{1}{2}\lambda_{ETH}^2 + y_t + \mu_t^{ETH} + \tfrac{1}{2}(\sqrt{v_t^{ETH}} - \lambda_{ETH})^2 = 0.
\]
Solving for $\mu_t^{ETH} = \mathbb{E}[r^{ETH}_{t,t+1} | \mathcal{F}_t]$ gives \eqref{eqn:lemma-eth}.
\end{proof}

\begin{proof}[Proof of Proposition \ref{prop:eth-staking}]
The staking return \eqref{eqn:cumethreturn} has the form $R_{t,t+1} = \exp\{y_t + r^{ETH}_{t,t+1}\}$ with $y_t = \gamma^{ETH}_t - \kappa$. Since $\gamma^{ETH}_t$ and $\kappa$ are $\mathcal{F}_t$-measurable, Lemma \ref{lem:eth-pricing} applies. Substituting $y_t = \gamma^{ETH}_t - \kappa$ into \eqref{eqn:lemma-eth} gives \eqref{eqn:eth-staking}.
\end{proof}

\begin{proof}[Proof of Proposition \ref{prop:eth-lending}]
The lending return \eqref{eqn:cumethreturn-yf} has the form $R_{t,t+1} = \exp\{y_t + r^{ETH}_{t,t+1}\}$ with $y_t = \psi_t^{ETH}$. Since $\psi_t^{ETH}$ is $\mathcal{F}_t$-measurable, Lemma \ref{lem:eth-pricing} applies. Substituting $y_t = \psi_t^{ETH}$ into \eqref{eqn:lemma-eth} gives \eqref{eqn:eth-lending}.
\end{proof}

\begin{lemma}
\label{lem:peg-pricing}
Let $\tilde{\eta}_t$ be defined by \eqref{eqn:peg-premium}. Then:
\begin{equation}
    \mathbb{E}\left[\exp\{\chi_{t+1} - \lambda_{\chi} \omega_{t+1} + \xi_t(\lambda_{\chi})\} ~\Big|~ \mathcal{F}_t\right] = \exp\{\chi_t - \tilde{\eta}_t\}.
    \label{eqn:lemma-peg}
\end{equation}
\end{lemma}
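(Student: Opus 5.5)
The plan is a direct computation, conditioning on the current peg state $\chi_t$. Because $\{\chi_t\}$ is a Markov chain and $\omega_{t+1}=\mathbf{1}_{\{\chi_{t+1}=-\eta\}}$ is a function of $\chi_{t+1}$, every conditional expectation given $\mathcal{F}_t$ is a finite sum over the two possible values of $\chi_{t+1}$. These sums are weighted by the relevant row of $P$, so I will treat the cases $\chi_t=0$ and $\chi_t=-\eta$ separately.

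In each case I first compute the normalizing constant from \eqref{eqn:normalizing}. When $\chi_t=0$, we have $\omega_{t+1}=1$ with probability $1-p_{0,0}$, so
\[
e^{\xi_t}=\frac{1}{p_{0,0}+(1-p_{0,0})e^{-\lambda_\chi}}.
\]
When $\chi_t=-\eta$, we have $\omega_{t+1}=1$ with probability $p_{\eta,\eta}$, so
\[
e^{\xi_t}=\frac{1}{p_{\eta,\eta}e^{-\lambda_\chi}+(1-p_{\eta,\eta})}.
\]

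Next I compute the left side of \eqref{eqn:lemma-peg}. Pairing $\chi_{t+1}=0$ with $\omega_{t+1}=0$ and $\chi_{t+1}=-\eta$ with $\omega_{t+1}=1$ gives the following.
\begin{itemize}
\item In the parity state, the left side equals $\bigl(p_{0,0}+(1-p_{0,0})e^{-\lambda_\chi}e^{-\eta}\bigr)e^{\xi_t}$. Dividing through by the normalizer, the weights become exactly $\tilde p_{0,0}$ from \eqref{eqn:p00-tilde} and $1-\tilde p_{0,0}$. The expression therefore reduces to $\tilde p_{0,0}+(1-\tilde p_{0,0})e^{-\eta}=e^{-\eta}+\tilde p_{0,0}(1-e^{-\eta})$. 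By \eqref{eqn:peg-premium} this is $e^{-\tilde\eta_t}=e^{\chi_t-\tilde\eta_t}$, since $\chi_t=0$.
\item In the de-peg state, the same step produces the weights $\tilde p_{\eta,\eta}$ from \eqref{eqn:pee-tilde} and $1-\tilde p_{\eta,\eta}$. The left side becomes $\tilde p_{\eta,\eta}e^{-\eta}+(1-\tilde p_{\eta,\eta})$. I then check that this equals
\[
e^{-\eta}\bigl(e^{\eta}-\tilde p_{\eta,\eta}(e^{\eta}-1)\bigr)=e^{\chi_t}e^{-\tilde\eta_t},
\]
again using \eqref{eqn:peg-premium}.
\end{itemize}

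There is no real obstacle; the argument is bookkeeping. The one point needing care is the second case. There the factor $e^{-\lambda_\chi}$ attaches to the staying-in-state probability $p_{\eta,\eta}$, because remaining at the discount corresponds to $\omega_{t+1}=1$. I also need to confirm that the factor $e^{\chi_t}=e^{-\eta}$ on the right-hand side is exactly what converts the form $1/(e^{\eta}-\tilde p_{\eta,\eta}(e^\eta-1))$ of $e^{\tilde\eta_t}$ into the mixture expression.
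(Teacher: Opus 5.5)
Your proposal is correct and follows the paper's proof essentially step for step. Both split on $\chi_t\in\{0,-\eta\}$, obtain $e^{\xi_t(\lambda_\chi)}$ from the normalization, and divide through to recognize the risk-adjusted weights $\tilde p_{0,0}$ and $\tilde p_{\eta,\eta}$. Both then match the result against the definition of $\tilde\eta_t$, including the factor $e^{\chi_t}=e^{-\eta}$ in the de-peg case, and both rely on the same implicit assumption that conditioning on $\mathcal{F}_t$ reduces to the Markov transition row indexed by $\chi_t$.
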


\begin{proof}
We verify \eqref{eqn:lemma-peg} for each state $\chi_t \in \{0, -\eta\}$.

\medskip
\noindent\textit{Case $\chi_t = 0$:} The state $\chi_{t+1} \in \{0, -\eta\}$ with physical probabilities $p_{0,0}$ and $1 - p_{0,0}$. By definition, $\omega_{t+1} = 1$ when $\chi_{t+1} = -\eta$. Evaluating the conditional expectation:
\[
\mathbb{E}\left[\exp\{\chi_{t+1} - \lambda_{\chi} \omega_{t+1} + \xi_t(\lambda_{\chi})\} ~\Big|~ \chi_t = 0\right] = e^{\xi_t(\lambda_{\chi})}\left[p_{0,0} \cdot 1 + (1-p_{0,0}) \cdot e^{-\eta - \lambda_{\chi}}\right].
\]
The normalizing constant satisfies $\mathbb{E}[\exp\{-\lambda_{\chi} \omega_{t+1} + \xi_t(\lambda_{\chi})\} | \chi_t = 0] = 1$, which gives $e^{\xi_t(\lambda_{\chi})} = [p_{0,0} + (1-p_{0,0})e^{-\lambda_{\chi}}]^{-1}$. Substituting:
\[
\mathbb{E}\left[\exp\{\chi_{t+1} - \lambda_{\chi} \omega_{t+1} + \xi_t(\lambda_{\chi})\} ~\Big|~ \chi_t = 0\right] = \frac{p_{0,0} + (1-p_{0,0})e^{-\eta - \lambda_{\chi}}}{p_{0,0} + (1-p_{0,0})e^{-\lambda_{\chi}}}.
\]
From \eqref{eqn:p00-tilde}, define $\tilde{p}_{0,0} = p_{0,0}/[p_{0,0} + (1-p_{0,0})e^{-\lambda_{\chi}}]$. The right-hand side simplifies to $\tilde{p}_{0,0} + (1-\tilde{p}_{0,0})e^{-\eta} = e^{-\eta} + \tilde{p}_{0,0}(1 - e^{-\eta})$. From \eqref{eqn:peg-premium} with $\chi_t = 0$, this equals $\exp\{-\tilde{\eta}_t\} = \exp\{\chi_t - \tilde{\eta}_t\}$.

\medskip
\noindent\textit{Case $\chi_t = -\eta$:} The state $\chi_{t+1} \in \{0, -\eta\}$ with physical probabilities $1 - p_{\eta,\eta}$ and $p_{\eta,\eta}$. By definition, $\omega_{t+1} = 1$ when $\chi_{t+1} = -\eta$. Evaluating the conditional expectation:
\[
\mathbb{E}\left[\exp\{\chi_{t+1} - \lambda_{\chi} \omega_{t+1} + \xi_t(\lambda_{\chi})\} ~\Big|~ \chi_t = -\eta\right] = e^{\xi_t(\lambda_{\chi})}\left[(1-p_{\eta,\eta}) \cdot 1 + p_{\eta,\eta} \cdot e^{-\eta - \lambda_{\chi}}\right].
\]
The normalizing condition gives $e^{\xi_t(\lambda_{\chi})} = [(1-p_{\eta,\eta}) + p_{\eta,\eta}e^{-\lambda_{\chi}}]^{-1}$. Substituting:
\[
\mathbb{E}\left[\exp\{\chi_{t+1} - \lambda_{\chi} \omega_{t+1} + \xi_t(\lambda_{\chi})\} ~\Big|~ \chi_t = -\eta\right] = \frac{(1-p_{\eta,\eta}) + p_{\eta,\eta}e^{-\eta - \lambda_{\chi}}}{(1-p_{\eta,\eta}) + p_{\eta,\eta}e^{-\lambda_{\chi}}}.
\]
From \eqref{eqn:pee-tilde}, define $\tilde{p}_{\eta,\eta} = p_{\eta,\eta}e^{-\lambda_{\chi}}/[(1-p_{\eta,\eta}) + p_{\eta,\eta}e^{-\lambda_{\chi}}]$. The right-hand side simplifies to $(1-\tilde{p}_{\eta,\eta}) + \tilde{p}_{\eta,\eta}e^{-\eta} = 1 - \tilde{p}_{\eta,\eta}(1 - e^{-\eta})$. From \eqref{eqn:peg-premium} with $\chi_t = -\eta$:
\[
\exp\{\chi_t - \tilde{\eta}_t\} = e^{-\eta} \cdot [e^{\eta} - \tilde{p}_{\eta,\eta}(e^{\eta} - 1)] = 1 - \tilde{p}_{\eta,\eta}(1 - e^{-\eta}).
\]
This confirms \eqref{eqn:lemma-peg}.
\end{proof}

\begin{proof}[Proof of Proposition \ref{prop:steth}]
Apply the Euler equation \eqref{eqn:euler} to the stETH return \eqref{eqn:cumstethreturn}:
\[
\mathbb{E}\left[\Lambda_{t,t+1} \exp\{\gamma^{stETH}_t + \psi^{stETH}_t + r^{stETH}_{t,t+1}\} ~\Big|~ \mathcal{F}_t\right] = 1.
\]
Substituting \eqref{eqn:ethsteth} for $r^{stETH}_{t,t+1}$:
\[
\mathbb{E}\left[\Lambda_{t,t+1} \exp\{\gamma^{stETH}_t + \psi^{stETH}_t + r^{ETH}_{t,t+1} + \chi_{t+1} - \chi_t\} ~\Big|~ \mathcal{F}_t\right] = 1.
\]
The quantities $\gamma^{stETH}_t$, $\psi^{stETH}_t$, and $\chi_t$ are $\mathcal{F}_t$-measurable. Substituting the pricing kernel \eqref{eqn:kernel} and factoring:
\begin{equation}
\exp\left\{-r_{t,t+1} - \tfrac{1}{2}\lambda_{ETH}^2 + \gamma^{stETH}_t + \psi^{stETH}_t - \chi_t\right\} \cdot \mathbb{E}\left[\exp\{A_{t+1} + B_{t+1}\} ~\Big|~ \mathcal{F}_t\right] = 1
\label{eqn:pf3-factor}
\end{equation}
where $A_{t+1} := r^{ETH}_{t,t+1} - \lambda_{ETH}\varepsilon^{ETH}_{t+1}$ and $B_{t+1} := \chi_{t+1} - \lambda_{\chi}\omega_{t+1} + \xi_t(\lambda_{\chi})$.

By independence of $\varepsilon^{ETH}_{t+1}$ and $\omega_{t+1}$, and since $\chi_{t+1}$ is determined by $(\chi_t, \omega_{t+1})$:
\[
\mathbb{E}\left[\exp\{A_{t+1} + B_{t+1}\} ~\Big|~ \mathcal{F}_t\right] = \mathbb{E}\left[\exp\{A_{t+1}\} ~\Big|~ \mathcal{F}_t\right] \cdot \mathbb{E}\left[\exp\{B_{t+1}\} ~\Big|~ \mathcal{F}_t\right].
\]
For the first factor, substituting $r^{ETH}_{t,t+1} = \mu_t^{ETH} + \sqrt{v_t^{ETH}}\,\varepsilon^{ETH}_{t+1}$ and applying the moment generating function of $\varepsilon^{ETH}_{t+1} | \mathcal{F}_t \sim N(0,1)$:
\[
\mathbb{E}\left[\exp\{A_{t+1}\} ~\Big|~ \mathcal{F}_t\right] = \exp\left\{\mu_t^{ETH} + \tfrac{1}{2}(\sqrt{v_t^{ETH}} - \lambda_{ETH})^2\right\}.
\]
For the second factor, Lemma \ref{lem:peg-pricing} gives:
\[
\mathbb{E}\left[\exp\{B_{t+1}\} ~\Big|~ \mathcal{F}_t\right] = \exp\{\chi_t - \tilde{\eta}_t\}.
\]
Substituting into \eqref{eqn:pf3-factor}, the terms $-\chi_t$ and $+\chi_t$ cancel. Taking logarithms:
\[
-r_{t,t+1} - \tfrac{1}{2}\lambda_{ETH}^2 + \gamma^{stETH}_t + \psi^{stETH}_t + \mu_t^{ETH} + \tfrac{1}{2}(\sqrt{v_t^{ETH}} - \lambda_{ETH})^2 - \tilde{\eta}_t = 0.
\]
Solving for $\mu_t^{ETH} = \mathbb{E}[r^{ETH}_{t,t+1} | \mathcal{F}_t]$ gives \eqref{eqn:steth}.
\end{proof}

\begin{proof}[Proof of Proposition \ref{prop:steth-staking}]
Equations \eqref{eqn:eth-staking} and \eqref{eqn:steth} imply:
\[
r_{t,t+1} + \lambda_{ETH}\sqrt{v_t^{ETH}} - \gamma^{ETH}_t + \kappa - \frac{v_t^{ETH}}{2} = r_{t,t+1} + \lambda_{ETH}\sqrt{v_t^{ETH}} + \tilde{\eta}_t - \gamma^{stETH}_t - \psi_t^{stETH} - \frac{v_t^{ETH}}{2}.
\]
which is equivalent to \eqref{eqn:prop4}.
\end{proof}

\begin{proof}[Proof of Proposition \ref{prop:steth-lending}]
Equations \eqref{eqn:eth-lending} and \eqref{eqn:steth} imply:
\[
r_{t,t+1} + \lambda_{ETH}\sqrt{v_t^{ETH}} - \psi_t^{ETH} - \frac{v_t^{ETH}}{2} = r_{t,t+1} + \lambda_{ETH}\sqrt{v_t^{ETH}} + \tilde{\eta}_t - \gamma^{stETH}_t - \psi_t^{stETH} - \frac{v_t^{ETH}}{2}.
\]
which is equivalent to \eqref{eqn:prop5}.
\end{proof}

\end{appendices}

\end{document}